\newif\ifcomment\commentfalse
\def\commentOFF{\commentfalse}
\long\outer\def\bc#1\ec{{\ifcomment \sloppy  $[${\bf suggest}]
{{#1}} \textbf{[end]} \fi }}
\long\outer\def\br#1\er{{\ifcomment \sloppy  $[${\bf suggest remove}]
{{#1}} \textbf{[end]} \fi }}
\long\outer\def\bo#1\eo{{\ifcomment \sloppy  $[${\bf instead of}]
{\textit{#1}} \textbf{[end]}  \fi }}
\long\outer\def\BC#1\EC{{\ifcomment \sloppy \par \#  \dotfill
{\textsc{#1}} \dotfill \# \par \fi }}
\newcommand{\sidecomment}[1]{}
\newcommand{\latop}[2]{\genfrac{}{}{0pt}{}{#1}{#2}}
\newcommand{\pc}{\ensuremath{{\bf P}}}
\newcommand{\npc}{\ensuremath{{\bf NP}}}
\newcommand{\alg}{\textsc{ALG}}
\newcommand{\priority}{\textsc{Priority Greedy}}
\newcommand{\fifo}{\textsc{FIFO}}
\newcommand{\ibm}{\textsc{IBM}}
\newcommand{\ibmp}{\textsc{Induced Bipartite Matching}}
\newcommand{\wgp}{\textsc{Wgp}}
\newcommand{\cwgp}{\textsc{C-Wgp}}
\newcommand{\fwgp}{\textsc{F-Wgp}}
\newcommand{\ins}{\mathcal{I}}
\newcommand{\di}{\ensuremath{d_I}}
\newcommand{\RELAT}{\ensuremath{\mathbb{Z}}}
\newcommand{\sfloor}[1]{\big\lfloor #1 \big\rfloor}
\newcommand{\affiliate}[1]{$^{\,#1}$}
\newcommand{\institutenr}[1]{$\!\!\!\!\!^{#1}\,$}
\begin{document}

\title{Minimizing Flow Time in the Wireless Gathering Problem}

\author{V. Bonifaci}{Vincenzo Bonifaci\affiliate{1,3}}
\author{P. Korteweg}{Peter Korteweg\affiliate{2}} 
\author{A. Marchetti-Spaccamela}{Alberto Marchetti-Spaccamela\affiliate{3}}
\author{L. Stougie}{Leen Stougie\affiliate{2,4}}

\address[TUB]{\institutenr{1}Technische Universit\"at Berlin,
Institut f\"ur Mathematik, Berlin, Germany}
  
\address[TUE]{\institutenr{2}Eindhoven University of Technology,
     Dept of Mathematics and Computer Science, Eindhoven, The Netherlands}
\email{p.korteweg@tue.nl, l.stougie@tue.nl}

\address[SAP]{\institutenr{3}University of Rome ``La Sapienza'',
     Dept of Computer and Systems Science, Rome, Italy}
\email{bonifaci@dis.uniroma1.it, alberto@dis.uniroma1.it}

\address[CWI]{\institutenr{4}CWI,
     Amsterdam, The Netherlands}
\email{stougie@cwi.nl}

\keywords{wireless networks, data gathering, approximation algorithms, distributed algorithms}
\subjclass{C.2.2: Computer-Communication Networks -- Network Protocols; F.2.2: Analysis of Algorithms and Problem Complexity -- Nonnumerical Algorithms and Problems. 
General terms: Algorithms, Design, Theory. }


\begin{abstract}
We address the problem of efficient data gathering in a wireless network through multi-hop communication.
We focus on the objective of minimizing the maximum flow time of a data packet.
We prove that no polynomial time algorithm for this problem can have approximation ratio less than $\Omega(m^{1/3})$ when $m$ packets have to be transmitted, unless $\pc=\npc$.
We then use resource augmentation to assess the performance of a FIFO-like strategy.
We prove that this strategy is 5-speed optimal, i.e., its cost remains within the optimal cost if we allow the algorithm to transmit data at a speed 5 times higher than that of the optimal solution we compare to.
\end{abstract}

\maketitle

\stacsheading{2008}{109-120}{Bordeaux}
\firstpageno{109}

\section{Introduction}
\label{sec:introduction}
Wireless networks are used in many areas of practical interest, such as mobile phone communication, ad-hoc networks, and radio broadcasting. Moreover, recent advances in miniaturization of computing devices equipped with short range radios have given rise to strong interest in sensor networks for their relevance in many practical scenarios (environment control, accident monitoring etc.) \cite{Akyldiz:2002,Pahlavan:1995}.

In many applications of wireless networks data gathering is a critical operation for extracting useful information from the operating environment: information collected  from multiple nodes in the network should be transmitted to a sink that may process the data, or act as a gateway to other networks.
We remark that in the case of wireless sensor networks sensor nodes have limited
computation capabilities, thus implying that data gathering is an even more crucial operation.
For this reasons, data gathering in sensor networks has received significant attention in the last few years; we cite just a few contributions \cite{Akyldiz:2002,Florens:2004}.
The problem finds also applications in Wi-Fi networks when many users need to access a gateway using multi-hop wireless relay-routing \cite{Bermond:2006}.

In this paper we focus on the problem of designing and analysing \emph{simple distributed}  algorithms that have \emph{good approximation guarantees} in \emph{realistic scenarios}.
Namely, we are interested in algorithms that are not only distributed but that are fast and can be implemented with limited overhead: sophisticated algorithms that require solving
complex combinatorial optimization problems are impractical for implementations and have mainly theoretical interest.

In order to formally assess the performance of the proposed algorithms we focus on the minimization of the maximum flow, i.e.\ minimizing the maximum time spent in the system by a packet. Almost all of the  previous literature considered the objective of minimizing the completion time (see for example \cite{Bar-Yehuda:1992, Bar-Yehuda:1993, Bermond:2006, Florens:2004, Gargano:2006, Kumar:2004, Pelc:2002}).
Flow minimization is a largely used criterion in scheduling theory that more suitably allows to assess the quality of service provided when multiple requests occur over time \cite{Chan:2006,Chekuri:2004,Kalya:2000,McCullough:2004}.

The problem of modelling realistic scenarios of wireless sensor networks is complicated by the many parameters that define the communication among nodes and influence the performance of transmissions (see for example \cite{Akyldiz:2002, Schmid:2006}).
In the sequel we assume that stations have a common clock, hence time can be divided into rounds. Each node is equipped with a half-duplex interface; as a result it cannot send and receive during the same round. Typically, not all nodes in the network can communicate with each other directly, hence packets have to be sent through several nodes before they can be gathered at the sink; this is called \emph{multi-hop} routing.

The key issue in our setting is \emph{interference}.
A radio signal has a \emph{transmission} radius, the distance over which the signal is strong enough to send data, and an \emph{interference} radius, the distance over which the radio signal is strong enough to interfere with other radio signals.
If node $i$ is transmitting data to node $j$ we have \emph{interference} (or  \emph{collision}) in communication if $j$ also receives signals from other stations at the same time.
Following Bermond et al.\ \cite{Bermond:2006}, we model the wireless network using a graph and a parameter $\di$. An edge between nodes $i$ and $j$ represents the fact that stations $i$ and $j$ are within transmission range of each other. The parameter $\di$ models the interference radius: a node $j$ successfully receives a signal if one of his neighbors is transmitting, and no other node within distance $\di$ from $j$ is transmitting in the same round.
The case $\di=1$ has been extensively considered earlier (see e.g.\ \cite{Bar-Yehuda:1993,Florens:2004,Gargano:2006}); but we remark that assuming $\di=1$ or assuming that interferences/transmissions are modeled according to the well known unit disk graph model \emph{does not} adequately represent interferences as they occur in practice \cite{Schmid:2006}.

Kumar et al.~\cite{Kumar:2005} give an overview of other interference models, including the so-called \emph{distance-2 interference model}.
The distance-2 interference model is similar to our interference model with $\di=1$, plus the extra constraint that no two transmitting nodes should be adjacent;
we observe that this requirement might pose unnecessary conditions.

\medskip
\textbf{The Wireless Gathering Problem.}
An instance of the {\em Wireless Gathering Problem} (\wgp) is given by a static wireless network which consists of several stations (nodes) and one base station (the sink), modeled as a graph, together with the interference radius $\di$; over time data packets arrive at stations that have to be gathered at the base station.

A feasible solution of an instance of \wgp\ is a schedule without interference which determines for each packet both route and times at which it is sent. As we will see in Section~\ref{lb_math} this can be modeled as a clean combinatorial optimization problem. We consider two different objectives.
One is to minimize completion time, i.e., the time needed to gather all the packets. Another, perhaps more natural, objective is minimization of maximum flow time of packets, i.e., the maximum difference between
release time and arrival time at the sink of a packet.
We call these two problems \cwgp\ and \fwgp, respectively.

\medskip
\textbf{Related work.}
The Wireless Gathering Problem was introduced by \citet{Bermond:2006} in the context of wireless access to the Internet in villages. The authors proved that the problem of minimizing the completion time is \npc-hard and presented a greedy algorithm with asymptotic approximation ratio at most $4$.
In \cite{Bonifaci:2006} we considered arbitrary release times and proposed an on-line greedy algorithm with the same approximation ratio.

\citet{Bar-Yehuda:1993} considered distributed algorithms for \cwgp.
Their model is a special case of our model, where $\di=1$ and there are no release dates.
Kumar et al.~\cite{Kumar:2004} studied the more general \emph{end-to-end} transmission problem, where each of the packets may have a different source \emph{and} destination in the network. Under the assumption of a distance-2 interference model, Kumar et al.\ considered the objective of minimizing the maximum completion time of the schedule, and presented hardness results and approximation algorithms for arbitrary graphs and disk graphs.
They presented distributed algorithms for packet scheduling over fixed routing paths, and used a linear program in order to determine the paths.
By contrast, we use a shortest paths tree to fix the routing paths, which is easier to implement in a distributed setting.

\citet{Florens:2004} considered the minimization of the completion time of data gathering in a setting with unidirectional antennas. They provided a 2-approximation algorithm for tree networks and an optimal algorithm for line networks.
\citet{Gargano:2006} gave a polynomial time algorithm for the special case of the same model  in which each node has exactly one packet to send.

Another related problem is to compute the throughput of a wireless network. This has been studied for example in \cite{Kumar:2005}.
We also observe that many papers study broadcasting in wireless networks \cite{Bar-Yehuda:1992,Pelc:2002}.
However, we stress that data gathering and broadcasting are substantially different tasks in the context of packet networks.
In particular, the idea of reversing a broadcast schedule to obtain a gathering schedule (which works when data can be aggregated) cannot be used.

\medskip
\textbf{Main results.}
In Section \ref{sec:hardness} we give inapproximability results for \fwgp.
We prove that \fwgp\ on $m$ packets cannot be approximated within $\Omega(m^{1/3})$, unless $\pc=\npc$. We also show that any algorithm using shortest paths in order to route the packets to the sink is no better than an $\Omega(m)$-approximation.

In Section \ref{sec:algorithms} we present a polynomial time resource augmented approximation algorithm for \fwgp\, which is in fact an on-line algorithm. We use resource augmentation because \fwgp\ is hard to approximate within a reasonable factor.

Resource augmentation was introduced in the context of machine scheduling in \cite{Kalya:2000}: the idea is to study the performance of on-line algorithms which are given processors faster than the adversary. Intuitively, this has been done to compensate an on-line scheduler for its lack of future information. Such an approach has led to a number of interesting results showing that moderately faster processors are sufficient to attain satisfactory performance guarantee
for various scheduling problems, e.g. \cite{Chekuri:2004,Kalya:2000}

Surprisingly, in the case of \fwgp\ a modest resource augmentation allows to compensate not only the lack of future information but also the approximation hardness of the problem: we present a $\sigma$-speed optimal algorithm for \fwgp\ and \cwgp; $\sigma$ depends on $\di$ and is at most $5$.

We remark that our algorithm can be implemented using local information only:
in particular, it suffices that a node is informed about the state of nodes within distance $\di+1$. On the other hand, our lower bounds hold for centralized algorithms as well.

\section{Mathematical preliminaries}
\label{lb_math}
We formulate \wgp\ as a graph optimization problem. The model we use can be seen as a generalization of a well-known model for packet radio networks \cite{Bar-Yehuda:1992,Bar-Yehuda:1993}.

An instance of \wgp\ consists of a graph $G=(V,E)$, a \emph{sink} node $s \in V$, a positive integer $\di$, and a set of \emph{data packets} $J=\{1,2,\ldots,m\}$.
Each packet $j \in J$ has an \emph{origin} $o_j \in V$ and a \emph{release date} $r_j \in \RELAT_+$.

We assume that time is discrete; we call a time unit a \emph{round}.
The rounds are numbered $0,1,2,\ldots$. During each round a node may either be \emph{sending} a packet, be \emph{receiving} a packet or be \emph{inactive}.
If two nodes $u$ and $v$ are adjacent, then $u$ can send a packet to $v$ during a round.
If node $u$ sends a packet $j$ to $v$ in some round, then the pair $(u,v)$ is said to be a \emph{call} from $u$ to $v$.
For each pair of nodes $u,v \in V$, the \emph{distance} between $u$ and $v$, denoted by $d(u,v)$, is the length of a shortest path from $u$ to $v$ in $G$.
Two calls $(u,v)$ and $(u',v')$ \emph{interfere} if they occur in the same round and either $d(u',v) \le \di$ or $d(u,v') \le \di$; otherwise the calls are \emph{compatible}. For this reason, the parameter $\di$ is called the \emph{interference radius}. The special case of a unit interference radius corresponds to the above cited model of \citet{Bar-Yehuda:1992}.

For every packet $j \in J$, the release date $r_j$ specifies the time at which the packet enters the network, i.e.\ packet $j$ cannot be sent before round $r_j$.
In the off-line version the entire instance is completely known at time 0; in the on-line version information about a packet becomes known only at its release date.

A solution for a \wgp\ instance is a schedule of compatible calls such that all packets are ultimately sent to the sink. Notice that while in principle each radio transmission can broadcast the same packet to multiple destinations, in the gathering problem having more than one copy of the same packet does not help, as it suffices to keep the one that will arrive first at the sink. Thus, we assume that at any time there is a unique copy of each packet.  Also, in the model we consider, packets cannot be aggregated.

Given a schedule, let $x_j^t$ be the unique node holding packet $j$ at time $t$. The integer
$C_j := \min \{ t: x_j^t=s \}$ is called the \emph{completion time} of packet $j$, while $F_j := C_j-r_j$ is the \emph{flow time} of packet $j$.
In this paper we are interested in the minimization of $\max_j F_j$ (\fwgp).
As an intermediate step in the analysis of \fwgp, we also study the minimization of $\max_j C_j$ (\cwgp).

Some auxiliary notation, we denote by $\delta_j := d(o_j,s)$ the minimum number of calls required for packet $j$ to reach $s$. We also define $\gamma := \di+2$, and $\gamma_0 := \sfloor{(\di+1)/2}$.

We analyze the performance of our algorithms using the standard worst case analysis techniques of approximation ratio analysis, as well as resource augmentation. Given a \wgp\ instance $\ins$ and an algorithm $\alg$, we define $\mathcal{C(\ins)}$ as the cost of \alg\ and $\mathcal{C}^*(\ins)$ as the cost of the optimal solution on $\ins$.
A polynomial-time algorithm is called an $\alpha$-approximation if for any instance $\ins$ we have $\mathcal{C(\ins)} \le \alpha \cdot \mathcal{C^*}(\ins)$.

In the resource augmentation paradigm, the algorithm is allowed to use more resources than the adversary. We consider resource augmentation based on speed, in which we assume that the algorithm can schedule compatible calls with higher speed than the optimal algorithm.
For any $\sigma \ge 1$, we call an algorithm a $\sigma$-speed algorithm if the time used by the algorithm to schedule a set of compatible calls is $1/\sigma$ time units. See \cite{Ausiello:1999} for more information on approximation algorithms, and \cite{Kalya:2000} for more on resource augmentation.

\section{Inapproximability}
\label{sec:hardness}
In this section we prove an inapproximability result for \fwgp.
To prove this result we consider the so-called \emph{induced matching} problem.
A matching $M$ in a graph $G$ is an \emph{induced matching} if no two edges in $M$ are joined by an edge of $G$.
The following rather straightforward relation between compatible calls in a bipartite graph and induced matchings will be crucial in the following.
\begin{proposition}
\label{lem:matching}
Let $G=(U,V,E)$ be a bipartite graph with node sets $(U,V)$ and edge set $E$. Then, a set
$M \subseteq E$ is an induced matching if and only if the calls corresponding to edges of $M$, directed from $U$ to $V$, are all pairwise compatible, assuming $\di=1$.
\qed
\end{proposition}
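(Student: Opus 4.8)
The plan is to unfold both definitions and observe that, edge by edge, the induced-matching condition and the pairwise-compatibility condition impose exactly the same constraints once one exploits that $G$ is bipartite with parts $U$ and $V$. Fix two distinct edges of $M$, say $e=\{u,v\}$ and $e'=\{u',v'\}$ with $u,u'\in U$ and $v,v'\in V$, and let $(u,v)$, $(u',v')$ be the associated calls; since all these calls are meant to take place in a single round, the ``same round'' clause in the definition of interference is automatically satisfied and can be ignored throughout.

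First I would argue that pairwise compatibility already forces $M$ to be a matching. If $e$ and $e'$ shared a vertex, that vertex would lie in $U$ or in $V$ but not both (as $U\cap V=\emptyset$); say $u=u'$ and $v\ne v'$. Then $d(u,v')=1\le\di$, so $(u,v)$ and $(u',v')$ interfere, a contradiction; the case $v=v'$ is symmetric, using $d(u',v)=1\le\di$. Conversely, an induced matching is a matching by definition. Hence on both sides of the claimed equivalence we may assume that $e$ and $e'$ are vertex-disjoint, and it remains only to compare ``no edge of $G$ joins $e$ and $e'$'' with ``the two calls are compatible''.

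The key observation is that the only edges of $G$ that could join $e$ and $e'$ lie among $\{u,u'\}$, $\{v,v'\}$, $\{u,v'\}$, $\{u',v\}$, and the first two cannot be edges of the bipartite graph $G$ since $\{u,u'\}\subseteq U$ and $\{v,v'\}\subseteq V$. Therefore $e$ and $e'$ are joined by an edge of $G$ if and only if $\{u,v'\}\in E$ or $\{u',v\}\in E$, that is, if and only if $d(u,v')\le 1$ or $d(u',v)\le 1$. With $\di=1$, this is precisely the condition under which the calls $(u,v)$ and $(u',v')$ interfere. Running this over all pairs of edges of $M$ gives both implications at once. I do not expect a genuine obstacle; the only point requiring care is to keep the orientations straight (tails in $U$, heads in $V$) so that the two candidate crossing edges $\{u,v'\}$ and $\{u',v\}$ are matched correctly against the two asymmetric alternatives $d(u,v')\le\di$ and $d(u',v)\le\di$ in the interference definition.
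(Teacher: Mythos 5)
Your proof is correct and complete: the paper states this proposition without proof (marking it as straightforward), and your argument is exactly the intended unfolding of the definitions, correctly using bipartiteness to rule out edges inside $U$ or inside $V$ and matching the two crossing edges $\{u,v'\}$, $\{u',v\}$ to the two clauses $d(u,v')\le 1$, $d(u',v)\le 1$ of the interference condition. The observation that compatibility already forces $M$ to be a matching (via $d(u,v)=1$ for a shared endpoint) is the one point that genuinely needs checking, and you handle it correctly.
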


\noindent
\ibmp\ (\ibm)\\
Instance: a bipartite graph $G$ and an integer $k$. \\
Question: does $G$ have an induced matching of size at least $k$?
\smallskip

We will use the fact that the optimization version of \ibm\ is hard to approximate: there exists an $\alpha >1$ such that it is $\npc$-hard to distinguish between graphs with induced matchings of size $k$ and graphs in which all induced matchings are of size at most $k/\alpha$.
The current best bound for $\alpha$ is $6600/6599$ \cite{Duckworth:2005}.

\begin{theorem}
\label{thm:apx}
Unless $\pc=\npc$, no polynomial-time algorithm can approximate \fwgp\
within a ratio better than $\Omega(m^{1/3})$.
\end{theorem}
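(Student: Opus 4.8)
The plan is to reduce from the gap version of \ibm. Given a bipartite graph $G=(U,V,E)$ and the promise that either $G$ has an induced matching of size $k$ or all induced matchings have size at most $k/\alpha$, we build a \fwgp\ instance whose optimal maximum flow time is small in the first case and large in the second. The parameter $m$ (number of packets) will be chosen polynomial in $|V(G)|$ and $k$, and the gap we obtain in the flow-time objective should be $\Theta(m^{1/3})$; tuning the number of packets to match the $m^{1/3}$ target against the constant \ibm-gap $\alpha$ is where the arithmetic has to be done carefully.

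First I would set $\di=1$ so that Proposition~\ref{lem:matching} applies: a round of compatible calls across the bipartition is exactly an induced matching of $G$. Attach the sink $s$ below $V$ (say, adjacent to every node of $V$, or via a short gadget) so that a packet sitting on a node $u\in U$ needs two calls to reach $s$: one call $u\to v$ for some neighbour $v$, then one call $v\to s$. Place one packet on each node of $U$ (so $|U|$ packets from this block), all released at time $0$, and think of them as one ``batch.'' In any round, the $U\to V$ calls that fire simultaneously form an induced matching, so at most (size of max induced matching) packets can advance per round; hence if $G$ has an induced matching of size $k$, the batch can be drained quickly, but if every induced matching has size $\le k/\alpha$, draining the batch is $\alpha$ times slower. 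To amplify a constant factor into a polynomial factor, I would chain many such batches in sequence — release batch $i$ only after batch $i-1$ should be done in the ``good'' case — and/or stack copies of the gadget in series along a long path, so that the per-batch slowdown compounds. The number of batches / path length is the knob that converts the $\alpha$-gap into an $\Omega(m^{1/3})$-gap while keeping $m$ polynomial; the exponent $1/3$ strongly suggests three ``dimensions'' of blow-up (e.g. batch size $\sim n$, number of batches $\sim n$, and something like $n$ copies of the construction), giving $m=\Theta(n^3)$ and a gap of $\Theta(n)=\Theta(m^{1/3})$.

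The two directions of correctness are then: (completeness) if $G$ has an induced matching of size $k$, exhibit an explicit schedule — repeatedly fire that fixed induced matching to push $k$ packets per round from $U$ into $V$, interleave the $V\to s$ calls without interference (this needs a small check that the second-hop calls can be scheduled compatibly, perhaps by staggering or by the sink-gadget design), and verify every packet's flow time is at most the claimed small bound; (soundness) if all induced matchings of $G$ have size $\le k/\alpha$, argue that \emph{no} schedule can do better — in any round at most $k/\alpha$ packets of a batch cross from $U$ to $V$, so some packet in a sufficiently late batch waits $\Omega(\alpha \cdot \text{(good bound)})$ rounds, because the backlog accumulates across the chained batches. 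A distinguishing algorithm for \fwgp\ with ratio $o(m^{1/3})$ would then decide the \ibm\ gap problem in polynomial time, contradicting $\pc\ne\npc$.

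The main obstacle I anticipate is the soundness argument: I must ensure that interference between \emph{different} batches or between the two hops cannot be exploited by a clever adversary to beat the induced-matching bottleneck — i.e., the lower bound on the optimal schedule really must be driven purely by the induced-matching size of $G$ and not circumvented by rerouting, by using nodes of $V$ as buffers in unexpected ways, or by overlapping the first hop of a later batch with the second hop of an earlier one. Designing the sink gadget and the batch-release schedule so that these interactions are provably harmless, while simultaneously keeping the construction size cubic in $n$, is the delicate part; the completeness side and the final ratio bookkeeping should be routine once the gadget is fixed.
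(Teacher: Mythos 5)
Your starting point coincides with the paper's: a reduction from the gap version of \ibm\ with $\di=1$, using Proposition~\ref{lem:matching} to identify rounds of compatible $U\to V$ calls with induced matchings. But there is a genuine gap at exactly the place you flag as ``delicate'': your construction, as described, has no mechanism that makes the $U$--$V$ interface the binding bottleneck. If packets originate on the nodes of $U$ and the sink is adjacent to every node of $V$, the sink can absorb at most one packet per round no matter what, so any release rate that the ``good'' schedule can sustain is at most one packet per round --- and an induced matching of size $1$ already moves one packet per round across the bipartition. The induced-matching size then never constrains the throughput and the two cases of the \ibm\ promise are indistinguishable. Your fallback amplifications do not repair this: stacking copies of the gadget in series only adds hops, and flow time is additive over hops, so a constant-factor slowdown stays a constant factor; and chaining batches only accumulates backlog if the ``bad'' case genuinely cannot keep up with the release rate, which is precisely what has not been established.

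What the paper adds to make the bottleneck real is structural, not just bookkeeping. All $m=\Theta(k^3)$ packets originate at a \emph{single} source $o$ adjacent to $U$, so they can be injected into $U$ only one per round; and $U$ and $V$ each carry a \emph{clique}, so (with $\di=1$) every $o\to U$ call and every $V\to s$ call interferes with every $U\to V$ call. Packets arrive in groups of $k$ every $k+1$ rounds; in each such phase, $k$ rounds are consumed injecting and absorbing one packet each (these two can be paired compatibly because the matching is induced), leaving only \emph{one} spare round for the bulk $U\to V$ transfer. A size-$k$ induced matching clears the whole group in that one round (flow time $2k+1$); if every induced matching has size at most $k/\alpha$, a potential function $\psi=(1+k/\alpha)m_o+m_U$ is shown to increase by at least $(1-1/\alpha)k$ per phase \emph{against any schedule}, so after $m/k=\Theta(k^2)$ phases at least $(2k+1)k$ packets are still stuck at $o$ or $U$, forcing maximum flow time $\Omega(k^2)=\Omega(m^{2/3})$ versus $O(k)$, i.e.\ a gap of $\Omega(m^{1/3})$. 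The single source, the cliques, and the potential argument (which is what rules out the ``clever adversary'' you worry about) are the missing ideas; without them the reduction does not produce any super-constant gap.
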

\begin{proof}
Let $(G,k)$ be an instance of \ibm, $G=(U,V,E)$. We construct a 4-layer network with a unique source $o$ (first layer), a clique on $U$ and a clique on $V$ (middle layers), and a sink $s$ (last layer). Source $o$ is adjacent to each node in $U$, and $s$ to each node in $V$. The edges between $U$ and $V$ are the same as in $G$ (see Figure \ref{fig:reduction}). We set $\di=1$.

\begin{figure}
\begin{center}
\psset{unit=0.25cm,radius=0.5,fillstyle=solid,fillcolor=white}
\newpsobject{showgrid}{psgrid}{subgriddiv=1,griddots=10,gridlabels=0pt}
\begin{pspicture}(-6,-4.25)(6,4.25)
\Cnode(0,5){s}\nput{0}{s}{$o$}
\Cnode(0,-5){t}\nput{0}{t}{$s$}
\Cnode(-6,2){p1}
\Cnode(6,2){p7}
\Cnode(-6,-2){q1}
\Cnode(6,-2){q7}
\ncline{p1}{p7}\ncline{q1}{q7}\ncline{p1}{q1}\ncline{p7}{q7}
\ncbox[linewidth=0.1,linearc=0.9,boxheight=1,boxdepth=1,fillcolor=lightgray,nodesep=0.5]{p1}{p7}
\ncbox[linewidth=0.1,linearc=0.9,boxheight=1,boxdepth=1,fillcolor=lightgray,nodesep=0.5]{q1}{q7}
\Cnode(-6,2){x1}
\Cnode(-4,2){x2}
\Cnode(-2,2){x3}
\Cnode(0,2){x4}
\Cnode(2,2){x5}
\Cnode(4,2){x6}
\Cnode(6,2){x7}
\Cnode(-6,-2){y1}
\Cnode(-4,-2){y2}
\Cnode(-2,-2){y3}
\Cnode(0,-2){y4}
\Cnode(2,-2){y5}
\Cnode(4,-2){y6}
\Cnode(6,-2){y7}
\ncline{s}{x1}\ncline{s}{x2}\ncline{s}{x3}\ncline{s}{x4}\ncline{s}{x5}\ncline{s}{x6}\ncline{s}{x7}
\ncline{t}{y1}\ncline{t}{y2}\ncline{t}{y3}\ncline{t}{y4}\ncline{t}{y5}\ncline{t}{y6}\ncline{t}{y7}
\rput(0,0){$G$}
\rput(8,2){$U$}
\rput(8,-2){$V$}
\end{pspicture}
\caption{The construction in the proof of Theorem \ref{thm:apx}}
\label{fig:reduction}
\end{center}
\end{figure}
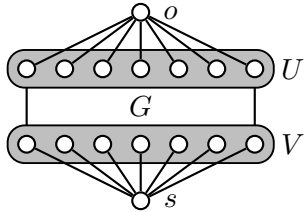

The \fwgp\ instance consists of $m:=(1-1/\alpha)^{-1} (1+k/\alpha) (2k+1) k = \Theta(k^3)$ packets with origin $o$. They are divided into $m/k$ groups of size $k$. Each packet in the $h$th group has release date $(k+1)h$, $h=0,\ldots,m/k-1$.
Rounds $(k+1)h$ till $(k+1)(h+1)-1$ together are a \emph{phase}.

We prove that if $G$ has an induced matching of size $k$, there is a gathering schedule of cost $2k+1$, while if $G$ has no induced matching of size more than $k/\alpha$, every schedule has cost at least $(2k+1)k = (2k+1) \Theta(m^{1/3})$. The theorem then follows directly.

Assume $G$ has an induced matching $M$ of size $k$, say $(u_i,v_i)$, $i=0\ldots k-1$. Then consider the following gathering schedule. In each phase, the $k$ new packets at $o$ are transmitted, necessarily one-by-one, to layer $U$ while old packets at layer $V$ (if any) are absorbed at the sink; then, in a \emph{single} round, the $k$ new packets move from $U$ to $V$ via the matching edges. More precisely, each phase can be scheduled in $k+1$ rounds as follows:

\noindent
1. for $i=0,\ldots,k-1$ execute in the $i$th round the two calls $(o,u_{i})$ and $(v_{i+1 \bmod k},s)$;\\
2. in the $k$th round, execute simultaneously all the calls $(u_i,v_i)$, $i=0,\ldots,k-1$.

\noindent
The maximum flow time of the schedule is $2k+1$, as a packet released in phase $h$ reaches the sink before the end of phase $h+1$.

In the other direction, assume that each induced matching of $G$ is of size at most $k/\alpha$. By Proposition \ref{lem:matching}, at most $k/\alpha$ calls can be scheduled in any round from layer $U$ to layer $V$. We ignore potential interference between calls from $o$ to $U$ and calls from $V$ to $s$; doing so may only decrease the cost of a schedule. As a consequence, we can assume that each packet follows a shortest path from $o$ to $s$.
Notice however that, due to the cliques on the layers $U$ and $V$, no call from $U$ to $V$ is compatible with a call from $o$ to $U$, or with a call from $V$ to $s$.

Let $m_o$ and $m_U$ be the number of packets at $o$ and $U$, respectively, at the beginning of a given phase. Also, let $\beta := 1+k/\alpha$.
We associate to the phase a potential value $\psi := \beta m_o + m_U$, and we show that at the end of the phase the potential will have increased proportionally to $k$. Let $c_o$ and $c_U$ denote the number of calls from $o$ to $U$ and from $U$ to $V$, respectively, during the phase. Since a phase consists of $k+1$ rounds, and in each round at most $k/\alpha$ calls are scheduled from $U$ to $V$, we have $c_o + c_U/(k/\alpha) \leq k+1$, or, equivalently  since $k/\alpha = \beta-1$,
\begin{equation}
\label{eq:rate-bound}
(\beta-1) c_o + c_U \leq (\beta-1)(k+1).
\end{equation}
If $m'_o$, $m'_U$ are the number of packets at $o$ and $U$ at the
beginning of the next phase, and $ \psi'=\beta m'_o + m'_U $ is the new potential, we have
\begin{eqnarray*}
m'_o &=& m_o + k - c_o \\
m'_U &=& m_U + c_o - c_U \\
\psi' -\psi &=& \beta (m'_o-m_o) + m'_U-m_U \\
&=& \beta (k-c_o) + c_o - c_U \\
&=& \beta k - (\beta-1) c_o - c_U \\
&\geq& \beta k - (\beta-1)(k+1)  \\
&=& k - (\beta-1) \\
&=& (1-1/\alpha) k
\end{eqnarray*}
where the inequality uses \eqref{eq:rate-bound}.

Thus, consider the situation after $m/k$ phases. The potential has become at least $\Psi := (1-1/\alpha)m$. By definition of the potential, this implies that at least $\Psi/\beta = (1-1/\alpha)(1+k/\alpha)^{-1} m = (2k+1)k$ packets reside at either $o$ or $U$; in particular, they have been released but not yet absorbed at the sink. Since the sink cannot receive more than one packet per round, this clearly implies a maximum flow time of $(2k+1)k = (2k+1) \Theta(m^{1/3})$ for one of these packets.
\end{proof}


In cases where the packets are routed via shortest paths to the sink -- a behavior common to many gathering protocols -- the result of
Theorem \ref{thm:apx} can be strengthened further.

\begin{theorem}
\label{thm:shortest-paths}
No algorithm that routes packets along shortest paths can approximate \fwgp\ within a ratio better than $\Omega(m)$.
\end{theorem}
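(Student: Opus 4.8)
The plan is to reuse the four-layer construction from the proof of Theorem~\ref{thm:apx}, but now to exploit the fact that the algorithm is \emph{forced} to route along shortest paths, rather than merely arguing that we may assume it does. In the construction there is essentially a unique shortest path shape: every packet must go $o \to U \to V \to s$, using exactly three calls. Crucially, as already observed, a call from $U$ to $V$ is incompatible with any call from $o$ to $U$ and with any call from $V$ to $s$, because of the cliques on the two middle layers; and by Proposition~\ref{lem:matching}, if $G$ has no induced matching larger than $k/\alpha$, then at most $k/\alpha$ calls can go from $U$ to $V$ in a single round. The key point I want to extract is that, in \emph{every} round, either some $U\to V$ calls happen (at most $k/\alpha$ of them, and then no $o\to U$ or $V\to s$ call happens that round) or no packet crosses from $U$ to $V$ at all that round. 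So the ``bottleneck'' of pushing packets through the $U$--$V$ cut has throughput at most $k/\alpha$ per active round, and activating that cut blocks the sink.

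First I would pick the instance parameters so that the ratio blows up to $\Omega(m)$. The idea is to make the groups much larger relative to the number of groups: release $k$ packets in a single batch at time $0$ (so $m = k$, one phase only, or a constant number of phases), and compare against an optimum that uses longer-but-faster routes. Actually the cleaner route is to keep the same layered graph but add, for the adversary only, nothing --- instead, choose the number of groups to be a constant (say one group of $m$ packets released at time $0$) and argue: if $G$ has an induced matching of size $k=m$, the optimum still needs $\Theta(m)$ rounds just to inject the packets one by one at $o$, so $\mathrm{OPT} = \Theta(m)$ and the bound degrades. That is not what we want. The right move, which I expect the authors take, is to let the adversary route \emph{not} along shortest paths: add a private ``fast lane'' gadget (a long induced path, or a star of parallel disjoint length-$3$ paths with no interference among them) from $o$ to $s$ that an unrestricted algorithm can use to evacuate all $m$ packets in $O(m)$ or even $O(\sqrt m)$ rounds by pipelining with no collisions, giving $\mathrm{OPT} = o(m) \cdot (\text{something})$ --- while a shortest-path algorithm is still trapped at the $U$--$V$ cut of width $k/\alpha$ and, by the potential argument of Theorem~\ref{thm:apx} (or a simpler direct counting argument since there is now effectively one phase), needs $\Omega(m)$ rounds to clear the backlog. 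Dividing the two costs gives the $\Omega(m)$ separation.

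The steps, in order: (i) take an \ibm\ instance $(G,k)$ on the hard-to-approximate side, embed it as the $U$--$V$ layer as before; (ii) attach an auxiliary interference-free routing structure connecting $o$ to $s$ that is only usable by non-shortest-path schedules, and verify it does not create shortcuts (its distance from $o$ to $s$ must exceed $3$, so shortest-path algorithms ignore it); (iii) choose $m \approx k^c$ packets released appropriately (likely a single batch, or $\Theta(k)$ groups) so that an unrestricted schedule pipelines everything through the auxiliary structure with total cost $\mathrm{OPT} = O(m^{1/2})$ or even $O(1)\cdot(\text{small})$; (iv) show that any shortest-path schedule is confined to the $o\to U\to V\to s$ route and, since the $U\to V$ cut carries at most $k/\alpha$ packets per round and blocks the sink whenever active, adapt the potential computation of Theorem~\ref{thm:apx} to conclude that $\Omega(m)$ packets are still in $\{o\}\cup U$ after $\Theta(m)$ rounds, forcing some flow time to be $\Omega(m)$; (v) combine to get ratio $\Omega(m)$.

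The main obstacle will be step~(ii): designing the auxiliary $o$-to-$s$ structure so that it simultaneously (a) has length strictly greater than $3$ in $G$, so that every shortest-path algorithm provably refuses to use it, (b) admits a schedule with many pairwise compatible calls so the unrestricted optimum can pipeline $m$ packets through it cheaply, and (c) does not interfere with, or create new shortcuts through, the $U$--$V$ gadget (so the shortest-path lower bound is unaffected). Getting the interference radius $\di=1$ geometry right --- in particular ensuring the auxiliary path nodes stay at distance $>\di$ from the middle-layer cliques --- is the delicate part; everything else is either a direct reuse of Proposition~\ref{lem:matching} and the potential argument, or elementary pipelining bookkeeping.
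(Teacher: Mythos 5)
There is a genuine gap here, and it is quantitative rather than cosmetic. Your plan keeps a single source $o$ and tries to rescue the optimum by giving it a private detour, but the detour does not remove the real obstruction you yourself noticed: every packet must leave $o$, and $o$ can place at most one packet per round, so if a batch of $k$ packets is released simultaneously at $o$ the \emph{optimal} maximum flow time is already $\Omega(k)$, detour or not. At the same time, the induced-matching machinery only produces a bottleneck of width $k/\alpha$ with $\alpha = 6600/6599$, i.e.\ barely below $k$; the backlog therefore accumulates at rate only $(1-1/\alpha)k$ per phase, and to force a backlog of size $\Omega(m)$ you need both $k$ large and $\Theta(m/k)$ phases. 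These two requirements collide: large $k$ forces $\max_j F^*_j = \Omega(k)$, so the ratio you can extract from this construction tops out around $\Theta(m^{1/3})$, which is exactly Theorem~\ref{thm:apx}, not $\Omega(m)$. There is also a conceptual mismatch: Theorem~\ref{thm:shortest-paths} is an \emph{unconditional} lower bound against a restricted class of algorithms (including exponential-time and offline ones), so invoking ``an \ibm\ instance on the hard-to-approximate side'' is a category error --- you need an explicit instance whose maximum induced matching is provably small, at which point the whole reduction scaffolding is doing no work. Finally, the one piece that would carry the argument, the auxiliary structure of your step~(ii), is explicitly left unconstructed.

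The paper's actual proof abandons the \ibm\ gadget entirely and uses a small explicit graph with \emph{three} sources $u_1,u_2,u_3$, each at distance $3$ from $s$ through a single shared hub $u$ (via a length-$2$ tail $u$--$s$), and at distance $4$ from $s$ via three pairwise non-interfering private paths. Batches of three packets are released every $5$ rounds. A shortest-path algorithm funnels all $\approx m$ packets through the hub, whose throughput is $O(1)$ per $3$ rounds, giving $\max_j C_j \ge 3m$ and hence $\max_j F_j \ge 4m/3$; the adversary pipelines each batch down the private length-$4$ paths and clears it within $6$ rounds, so $\max_j F^*_j \le 6$. Using several sources is precisely what sidesteps the single-source injection bound that blocks your construction; if you want to salvage your approach, you essentially have to rediscover this multi-source, shared-short-path-versus-private-long-path structure.
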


\begin{proof} Consider the instance in Figure \ref{fig:lb-sp}.
The adversary releases a message at each of the nodes $u_1,u_2,u_3$
at times $5i$, $i=0,\ldots,m/3$.
Any shortest paths following algorithm sends all messages via $u$, yielding
$\max_j C_j \geq 3m$. As $r_j \le 5m/3$ for each message $j$, we have
$\max_j F_j \ge 3m - 5m/3 = 4m/3$.

The adversary sends each message over the path which does not contain $u$.
We claim that it is possible to do this so that all messages released
at time $5i$ arrive at the sink in round $5(i+1)+1$ latest.
If the claim holds, then we have $\max_j F^*_j \le 5(i+1)+1-5i=6$, from which the theorem will follow.

We prove the claim by induction.
Suppose the claim holds for messages released in round $5(i-1)$.
Then, the last message released at time $5(i-1)$ latest is sent to the sink in round $5i$. This message does not block any message released in round $5i$.
Now, the adversary sends the messages released in round $5i$ to a node adjacent to $s$ in 3 rounds, i.e. in the rounds $5i$, $5i+1$ and $5i+2$.
Then, it requires another 3 rounds to send all 3 messages to the sink, i.e.\ the rounds $5i+3$, $5i+4$, and $5(i+1)$. This proves the theorem, since $\max_j F_j / \max_j F^*_j \geq (4m/3)/6  = 2m/9$.
\end{proof}

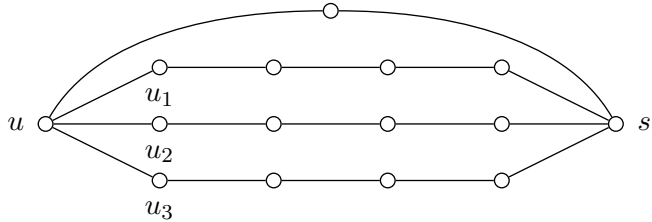
\begin{figure}[!ht]
\begin{center}
\psset{unit=0.75cm,arrows=-,linewidth=0.5pt,radius=3pt}
\begin{pspicture}(-8,-1.5)(10,1.75)
\Cnode(0,2){q2}
\Cnode(-5,0){q1}
\Cnode(-3,0){n21}
\Cnode(-1,0){n22}
\Cnode(1,0){n23}
\Cnode(3,0){n24}
\Cnode(5,0){s}
\Cnode(-3,1){n11}
\Cnode(-1,1){n12}
\Cnode(1,1){n13}
\Cnode(3,1){n14}
\Cnode(-3,-1){n31}
\Cnode(-1,-1){n32}
\Cnode(1,-1){n33}
\Cnode(3,-1){n34}
\nput{0}{s}{$s$}
\nput{180}{q1}{$u$}
\nput{-90}{n11}{$u_1$}
\nput{-90}{n21}{$u_2$}
\nput{-90}{n31}{$u_3$}
\ncline{n11}{n12}
\ncline{n12}{n13}
\ncline{n13}{n14}
\ncline{n21}{n22}
\ncline{n22}{n23}
\ncline{n23}{n24}
\ncline{n31}{n32}
\ncline{n32}{n33}
\ncline{n33}{n34}
\nccurve[angleA=60,angleB=180]{q1}{q2}
\nccurve[angleA=120,angleB=0]{s}{q2}
\ncline{n11}{q1}
\ncline{n21}{q1}
\ncline{n31}{q1}
\ncline{n14}{s}
\ncline{n24}{s}
\ncline{n34}{s}
\end{pspicture}
\caption{No shortest path based algorithm is better than $\Omega(m)$-approximate ($\di=1$).}
\label{fig:lb-sp}
\end{center}
\end{figure}

\section{Approximation Algorithms}
\label{sec:algorithms}

In this section we present and analyze a \fifo\ algorithm for \wgp. First, we show that \fifo\ is a $5$-approximation for \cwgp. Note that the best approximation algorithm known is $4$-approximate; the main interest in analyzing \fifo\ is that we use it as a subroutine in an algorithm for \fwgp\ which uses resource augmentation. Next, we prove that this algorithm with resource augmentation is a $\sigma$-speed optimal algorithm, for any $\sigma \ge 5$, for both \cwgp\ and \fwgp.

\subsection{An approximation algorithm for C-WGP}
We will present an approximation algorithm for \cwgp.
The algorithm we consider is actually a special case of a general scheme for which we can prove an upper bound on the completion time \cite{Bonifaci:2006}. In this scheme, called \priority, each packet is assigned a unique priority based on some algorithm-specific rules. Then, in each round, packets are considered in order of decreasing priority and are sent towards the sink as long as there is no interference with higher priority packets.

\begin{algo}[\priority]
In every round, consider the available packets in order of decreasing priority, and send each next packet along a shortest path from its current node to $s$, as long as this causes no  interference with any higher-priority packet.
\end{algo}

\noindent
We first derive upper bounds on the completion time $C_j$ of each packet $j$ in a \priority\ solution.

We say that packet $j$ is \emph{blocked} in round $t$ if $t \geq r_j$ but $j$ is not sent in round $t$.
Note that in a \priority\ algorithm a packet can only be blocked due to interference with
a higher priority packet. We define the following \emph{blocking relation} on a \priority\ schedule: $k \prec j$ if in the last round in which $j$ is blocked,
$k$ is the packet closest to $j$ that is sent in that round and has a priority higher than $j$ (ties broken arbitrarily).
The blocking relation induces a directed graph $F=(J,A)$ on the packet set $J$ with an arc $(k,j)$ for each $k,j \in J$ such that $k \prec j$.
Observe that, for any \priority\ schedule, $F$ is a directed forest and the root of each tree of $F$ is a packet which is never blocked.
For each $j$ let $T(j) \subseteq F$ be the tree of $F$ containing $j$,
$b(j)\in J$ be the root of $T(j)$, and $P(j)$ the set of packets along the path in $F$ from $b(j)$ to $j$.
Finally, define
$\pi_j := \min \{ \delta_j, \gamma_0 \}$ and $R_j := r_j + \delta_j - \pi_j$.

We have upper and lower bounds on the completion time of a packet.
\begin{lemma}[\cite{Bonifaci:2006}]
\label{lemma:ub}
For each packet $j \in J$,
$ C_j \leq R_{b(j)} + (\gamma/\gamma_0) \cdot \sum_{i \in P(j)} \pi_i. $
\end{lemma}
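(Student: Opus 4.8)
The plan is to prove the bound $C_j \le R_{b(j)} + (\gamma/\gamma_0)\sum_{i \in P(j)} \pi_i$ by induction along the blocking forest $F$, following the path $P(j)$ from the root $b(j)$ down to $j$. The base case is the root packet $b(j)$ itself: since $b(j)$ is never blocked, once it is released it is sent every round along a shortest path, so $C_{b(j)} = r_{b(j)} + \delta_{b(j)}$. To match the claimed formula we need $r_{b(j)} + \delta_{b(j)} \le R_{b(j)} + (\gamma/\gamma_0)\pi_{b(j)}$, i.e. $\pi_{b(j)} \le (\gamma/\gamma_0)\pi_{b(j)}$, which holds since $\gamma = \di+2 \ge \di+1 \ge 2\gamma_0$, so $\gamma/\gamma_0 \ge 1$ (and $\pi$ is nonnegative). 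So the base case reduces to a trivial inequality plus the observation that unblocked packets travel at full speed.

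For the inductive step, suppose $k \prec j$, i.e. $(k,j)$ is an arc of $F$, and assume the bound holds for $k$. I would let $t^*$ be the last round in which $j$ is blocked; by definition of the blocking relation, $k$ is the closest higher-priority packet to $j$ that is sent in round $t^*$. After round $t^*$ packet $j$ is never blocked again, so it reaches $s$ within $d(x_j^{t^*+1}, s)$ further rounds, and $d(x_j^{t^*+1}, s) \le d(x_j^{t^*}, s)$ since in round $t^*$ either $j$ does not move or (by the time we examine the last blocked round) at worst stays put — actually $j$ is blocked in round $t^*$, so $x_j^{t^*+1} = x_j^{t^*}$ and $C_j \le (t^*+1) + d(x_j^{t^*}, s)$. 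Now I need a bound on $d(x_j^{t^*}, s)$ and on $t^*$ in terms of quantities associated with $k$. The key geometric fact is that because $(k,j)$ interfere in round $t^*$, the nodes $x_k^{t^*}$ and $x_j^{t^*}$ are within distance roughly $\di$ of each other (this is where $\di=1$ versus general $\di$ and the definition $\gamma = \di+2$, $\gamma_0 = \lfloor(\di+1)/2\rfloor$ enter); hence $d(x_j^{t^*}, s) \le d(x_k^{t^*}, s) + O(\di)$. Combining: packet $k$ is at distance $d(x_k^{t^*},s)$ from $s$ at time $t^*$, and since $k$ has higher priority than $j$ and the blocking structure controls how $k$ itself progresses, one gets $t^* + d(x_k^{t^*}, s) \le C_k$ up to a correction, and then $C_j \le C_k + (\gamma/\gamma_0)\pi_j$ or something of that shape, which closes the induction once summed over $P(j)$.

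The main obstacle — and the part that needs real care rather than bookkeeping — is the geometric/timing estimate relating $j$'s position and the round $t^*$ to packet $k$'s trajectory, i.e. showing that the ``detour plus delay'' that $j$ suffers because of its last blocking by $k$ is at most $(\gamma/\gamma_0)\pi_j$. This is where the constants $\gamma$ and $\gamma_0$ are calibrated: $\gamma_0 = \lfloor(\di+1)/2\rfloor$ is essentially the radius within which a packet can be ``pushed'' toward $s$ between interferences, and $\gamma/\gamma_0$ is the resulting blow-up factor per level of the forest. One has to argue that between consecutive blockings a packet either makes progress toward $s$ or is displaced by a bounded amount, and that the total displacement along $P(j)$ telescopes correctly. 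Since this lemma is quoted from \cite{Bonifaci:2006}, I would expect the actual proof there to set up a careful potential/charging argument over the rounds, tracking for each packet the quantity $t + d(x_j^t, s)$ and showing it increases by at most a controlled amount each time the packet is blocked, with the number of such increases bounded by $\sum_{i \in P(j)} \pi_i$ after appropriate scaling. The lower-bound companion (also alluded to in the text) would be the comparatively easy direction: $C_j \ge r_j + \delta_j \ge R_j + \pi_j$, just from the fact that a packet needs at least $\delta_j$ calls to reach $s$.
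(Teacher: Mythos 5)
First, a point of reference: this paper does not prove Lemma~\ref{lemma:ub} at all --- it is imported from \cite{Bonifaci:2006} --- so there is no in-paper proof to compare against. Your reconstruction (induction along the blocking forest, with the never-blocked root completing at exactly $r_{b(j)}+\delta_{b(j)}$, and a per-arc increment charged to the interference geometry at the last blocked round) is indeed the intended strategy, and the skeleton is sound: a per-arc bound $C_j \le C_k + (\gamma/\gamma_0)\pi_j$ for each arc $(k,j)$ of $F$ telescopes along $P(j)$ to $C_j \le r_{b(j)}+\delta_{b(j)} + (\gamma/\gamma_0)\sum_{i \in P(j),\, i\neq b(j)}\pi_i$, which is at most $R_{b(j)}+(\gamma/\gamma_0)\sum_{i\in P(j)}\pi_i$ because $(\gamma/\gamma_0)\pi_{b(j)}\ge\pi_{b(j)}$. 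Your base case is also correct.

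The gap is that you never actually establish the per-arc inequality, and the estimates you do state are too loose to yield it. With $t^*$ the last round in which $j$ is blocked, you correctly get $C_j = t^*+1+d(x_j^{t^*},s)$, but ``$d(x_j^{t^*},s)\le d(x_k^{t^*},s)+O(\di)$'' and ``$t^*+d(x_k^{t^*},s)\le C_k$ up to a correction'' only give $C_j\le C_k+\gamma+c$ for an unspecified $c$, and already $c=1$ breaks the induction: when $\delta_j\ge\gamma_0$ the budget $(\gamma/\gamma_0)\pi_j$ equals $\gamma$ \emph{exactly}. What is needed is the tight accounting: (i) any higher-priority call $(u',v')$ interfering with $j$'s intended call $(u,v)$ has $d(u,u')\le\di+1$, so the closest sent higher-priority packet $k$ --- which, note, need not be the packet actually causing the interference --- satisfies $d(x_j^{t^*},s)\le d(x_k^{t^*},s)+\di+1=d(x_k^{t^*},s)+\gamma-1$; (ii) since $k$ is sent in round $t^*$ it has not yet arrived, so $d(x_k^{t^*},s)\ge 1$ and $C_k\ge t^*+d(x_k^{t^*},s)$; (iii) a case split on $\delta_j$: if $\delta_j\ge\gamma_0$, combine (i) and (ii) to get $C_j-C_k\le\gamma=(\gamma/\gamma_0)\pi_j$; if $\delta_j<\gamma_0$, use instead $d(x_j^{t^*},s)\le\delta_j$ and $d(x_k^{t^*},s)\ge 1$ to get $C_j-C_k\le\delta_j\le(\gamma/\gamma_0)\pi_j$. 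With these filled in your induction closes; without them, the ``or something of that shape'' step is exactly where the lemma lives. (A side remark: your proposed lower-bound companion $C_j\ge r_j+\delta_j=R_j+\pi_j$ is trivially true but is not Lemma~\ref{lemma:lb}, which is a congestion bound over a whole set $S$ with the sum $\sum_{i\in S}\pi_i$, and is genuinely harder.)
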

\begin{lemma}[\cite{Bonifaci:2006}]
\label{lemma:lb}
Let $S \subseteq J$ be a nonempty set of packets, and let $C^*_i$ denote the completion time of packet $i$ in some feasible schedule. Then there is $k \in S$ such that
$ \max_{i \in S} C^*_i \geq R_k + \sum_{i \in S} \pi_i. $
\end{lemma}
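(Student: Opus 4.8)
The plan is to prove the bound by a congestion argument localized around the sink. Say that a call $(u,v)$ is a \emph{core call} if its destination is within distance $\gamma_0-1$ of $s$, i.e.\ $d(v,s)\le\gamma_0-1$ (note $\gamma_0\ge 1$ since $\di\ge 1$, so a call delivering a packet to $s$ itself is a core call). The first step is the geometric observation that \emph{in any feasible schedule at most one core call takes place per round}. Indeed, if $(u,v)$ and $(u',v')$ were two distinct core calls in the same round, then, using that $u$ and $v$ are adjacent,
\[
 d(u,v') \;\le\; d(u,v)+d(v,s)+d(s,v') \;\le\; 1+(\gamma_0-1)+(\gamma_0-1) \;=\; 2\gamma_0-1 \;\le\; \di ,
\]
because $\gamma_0=\sfloor{(\di+1)/2}\le(\di+1)/2$; hence the calls would interfere, contradicting feasibility. (Cases where the two calls share an endpoint are excluded by the half-duplex / single-packet constraints.)

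Next I would prove two per-packet facts for an arbitrary feasible schedule with completion times $C^*_j$. \emph{(a) Packet $j$ makes at least $\pi_j$ core calls.} Trace $d(x_j^t,s)$: it equals $\delta_j$ at round $r_j$, equals $0$ at round $C^*_j$, and changes by at most one each round. If $j$ is ever at distance $\ge\gamma_0$, let $\tau$ be the last such round; then round $\tau$ brings $j$ from distance $\gamma_0$ to $\gamma_0-1$ (a core call), and the remaining $\ge\gamma_0-1$ calls down to $s$ are all core calls, giving $\ge\gamma_0=\pi_j$ in total. Otherwise $j$ stays within distance $\gamma_0-1$ throughout, $\delta_j\le\gamma_0-1$, and all $\ge\delta_j=\pi_j$ of its calls are core calls. \emph{(b) The first core call of $j$ occurs in round $\ge R_j$, and all core calls of $j$ occur in rounds $\le C^*_j-1$.} The second part holds because the arrival call at $s$ is itself a core call occurring in round $C^*_j-1$. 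For the first part, when $\delta_j\le\gamma_0$ we have $R_j=r_j$ and there is nothing to prove; when $\delta_j>\gamma_0$, reaching a node at distance $\le\gamma_0-1$ requires at least $\delta_j-\gamma_0+1$ calls starting from round $r_j$, so the first core call is no earlier than round $r_j+\delta_j-\gamma_0=R_j$.

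To conclude, I would pick $k\in S$ minimizing $R_i$ over $i\in S$. By (a) and (b), for each $i\in S$ the packet $i$ contributes at least $\pi_i$ core calls, all of them in the round interval $[R_i,\,C^*_i-1]\subseteq[R_k,\,\max_{i\in S}C^*_i-1]$. By the geometric observation all these core calls lie in distinct rounds, so that interval must contain at least $\sum_{i\in S}\pi_i$ integers:
\[
 \big(\max_{i\in S}C^*_i-1\big)-R_k+1 \;\ge\; \sum_{i\in S}\pi_i ,
\]
which is exactly $\max_{i\in S}C^*_i\ge R_k+\sum_{i\in S}\pi_i$.

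I expect the main difficulty to be the geometric observation together with the bookkeeping of constants: the tightness of the bound (the ``$+\pi_i$'' with no constant loss) relies simultaneously on the exact choice $\gamma_0=\sfloor{(\di+1)/2}$ making $2\gamma_0-1\le\di$, and on correctly accounting for the unit offset between the round of a packet's arrival call and its completion time $C^*_j$. One should also dispatch the degenerate packets with $o_j=s$ (then $\pi_j=0$ and $j$ contributes nothing to either side) and make sure the interference argument is not derailed by two calls sharing an endpoint.
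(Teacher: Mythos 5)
Your proof is correct: the geometric fact that $2\gamma_0-1\le\di$ forces at most one ``core call'' per round, the counting of at least $\pi_j$ core calls per packet none earlier than round $R_j$ and none later than round $C^*_j-1$, and the choice of $k$ minimizing $R_i$ all check out, including the degenerate cases. This is essentially the same bottleneck-around-the-sink argument as the original proof in the cited reference \cite{Bonifaci:2006} (the present paper imports the lemma without proof), so there is nothing further to add.
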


Our algorithm is based on a version of the \priority\ scheme, in which a
higher priority is given to packets with earlier release dates (ties broken arbitrarily).
We call this algorithm \fifo\ after the famous {\em first-in-first-out} algorithm
in scheduling and service systems, though in our case packets do not necessarily arrive
in order of their priority at the sink.

\begin{theorem}
\label{thm:fifo}
\fifo\ is a $(1+\gamma/\gamma_0)$-approximation algorithm for \cwgp.
\end{theorem}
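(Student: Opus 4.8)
The plan is to play Lemma~\ref{lemma:ub} against Lemma~\ref{lemma:lb}, choosing the witness set in the latter to be precisely the blocking path that appears in the former. Since both lemmas are stated for arbitrary \priority\ schedules, the argument will not use any \fifo-specific property; it simply shows that \emph{every} instance of the \priority\ scheme is a $(1+\gamma/\gamma_0)$-approximation for \cwgp.

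First I would fix a packet $j$ with $C_j = \max_i C_i$, so that the cost of \fifo\ on an instance $\ins$ equals $C_j$. Lemma~\ref{lemma:ub} gives $C_j \le R_{b(j)} + (\gamma/\gamma_0)\sum_{i\in P(j)}\pi_i$, and it suffices to bound each of the two summands on the right-hand side by $\mathcal{C}^*(\ins)$.

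For the first summand: in any feasible schedule the root $b(j)$ of its blocking tree cannot reach $s$ before round $r_{b(j)}+\delta_{b(j)}$, so $C^*_{b(j)}\ge r_{b(j)}+\delta_{b(j)}$. Because $\pi_{b(j)}=\min\{\delta_{b(j)},\gamma_0\}\ge 0$, we get $R_{b(j)}=r_{b(j)}+\delta_{b(j)}-\pi_{b(j)}\le r_{b(j)}+\delta_{b(j)}\le C^*_{b(j)}\le \mathcal{C}^*(\ins)$. For the second summand: apply Lemma~\ref{lemma:lb} with $S:=P(j)$, which is nonempty since $b(j)\in P(j)$; this yields some $k\in P(j)$ with $\max_{i\in P(j)}C^*_i\ge R_k+\sum_{i\in P(j)}\pi_i$. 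Since release dates are nonnegative and $\pi_k\le\delta_k$, we have $R_k=r_k+\delta_k-\pi_k\ge 0$, hence $\sum_{i\in P(j)}\pi_i\le \max_{i\in P(j)}C^*_i\le \mathcal{C}^*(\ins)$. Combining the two bounds gives $C_j\le(1+\gamma/\gamma_0)\,\mathcal{C}^*(\ins)$, as required; since \fifo\ clearly runs in polynomial time, this establishes the approximation ratio.

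I do not expect a real obstacle beyond packaging the two cited lemmas correctly: the only things one must spot are that the set to feed into Lemma~\ref{lemma:lb} is the path $P(j)$ realizing the upper bound, together with the two elementary inequalities $R_{b(j)}\le C^*_{b(j)}$ and $R_k\ge 0$. (If a self-contained treatment were wanted, the genuinely substantial work would be re-deriving Lemmas~\ref{lemma:ub} and~\ref{lemma:lb} from \cite{Bonifaci:2006}, which track how a packet's progress along its shortest path is delayed by the blocking forest and, respectively, how any feasible schedule must serialize the $\pi_i$-``load'' of a set of packets through the interference bottleneck.)
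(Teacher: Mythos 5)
Your proof is correct, and it reaches the stated ratio by a genuinely different packaging of the two lemmas. The paper instead applies Lemma~\ref{lemma:lb} to the whole blocking tree $T(j)$, rewrites $R_{b(j)}$, and uses the \fifo-specific fact $r_{b(j)}\le r_k$ (the root has the earliest release date in its tree) to fold $r_{b(j)}$ into the lower bound scaled by $\gamma/\gamma_0$; it then pays the leftover $\delta_{b(j)}$ with the separate observation $\max_{i}C^*_i\ge \delta_{b(j)}$, arriving at $C_j\le(\gamma/\gamma_0)\mathcal{C}^*(\ins)+\delta_{b(j)}\le(1+\gamma/\gamma_0)\mathcal{C}^*(\ins)$. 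You instead take $S=P(j)$ in Lemma~\ref{lemma:lb} and charge the two summands of Lemma~\ref{lemma:ub} to $\mathcal{C}^*(\ins)$ independently, via $R_{b(j)}\le r_{b(j)}+\delta_{b(j)}\le C^*_{b(j)}$ and $R_k\ge 0$. Both routes give the same constant, but yours is shorter and, as you note, never invokes the \fifo\ priority rule, so it shows that \emph{every} \priority\ algorithm is a $(1+\gamma/\gamma_0)$-approximation for \cwgp\ --- a slightly stronger statement that is consistent with the paper's remark that the $R_j$-based priority rule even achieves ratio $\gamma/\gamma_0$. What the paper's more laborious bookkeeping buys is the intermediate inequality in the form $C_j\le(\gamma/\gamma_0)(\text{lower bound on } \max_i C^*_i)+\delta_{b(j)}$, which mirrors the analysis of the $R_j$-priority rule; but for the theorem as stated your argument is complete, including the two small checks ($\pi_{b(j)}\ge 0$ and $\pi_k\le\delta_k$, $r_k\ge 0$) that make the two charges legitimate.
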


\begin{proof}
Let $j$ be the packet having maximum $C_j$, and consider $T(j)$, the tree containing $j$ in the forest induced by the blocking relation. We can apply Lemma \ref{lemma:lb} with $S=T(j)$ to obtain
\begin{eqnarray}
\label{eqn:fifo:lb}
\max_{i \in T(j)} C^*_i \geq r_k + \delta_k + \sum_{\latop{i \in T(j)}{i \neq k}} \pi_i
\end{eqnarray}
where $k$ is some packet in $T(j)$.
On the other hand, by using Lemma \ref{lemma:ub},
\begin{eqnarray}
\label{eqn:fifo:ub}
C_j & \leq & R_{b(j)} + \frac{\gamma}{\gamma_0} \sum_{i \in P(j)} \pi_i \\
& = & r_{b(j)} + \delta_{b(j)} - \pi_{b(j)} + \frac{\gamma}{\gamma_0} \sum_{i \in P(j)} \pi_i \nonumber \\
& \le & r_{b(j)} + \frac{\gamma}{\gamma_0} \min\{\delta_k, \gamma_0\} + \frac{\gamma}{\gamma_0} \sum_{\latop{i \in P(j)}{i \neq k}} \pi_i
+ \delta_{b(j)} \nonumber \\
& \le & \frac{\gamma}{\gamma_0} \bigg( r_k + \delta_k + \sum_{\latop{i \in T(j)}{i \neq k}} \pi_i \bigg) + \delta_{b(j)}. \nonumber
\end{eqnarray}
\BC Peter: removed "min" from lines 2 and 3, and replaced = with $\le$ in line 3. V: OK.\EC
where we used the fact that, by definition of \fifo, 
we have $r_{b(j)} \leq r_k$.
Equations \eqref{eqn:fifo:lb} and \eqref{eqn:fifo:ub}, and observation $\max_{i \in T(j)} C^*_i \geq \delta_{b(j)}$ prove the theorem.
\end{proof}

It is straightforward to verify that $2 \leq \gamma/\gamma_0 \leq 4$ for all $\di$, while $\gamma/\gamma_0 = 3$ for $\di=1$.
\begin{corollary}
\fifo\ is a $5$-approximation algorithm for \cwgp. When $\di=1$, \fifo\ is a $4$-approximation for \cwgp.
\end{corollary}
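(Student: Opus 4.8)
The plan is to obtain the corollary as an immediate consequence of Theorem~\ref{thm:fifo}, which already states that \fifo\ is a $(1+\gamma/\gamma_0)$-approximation for \cwgp. Thus everything reduces to the elementary estimate $\gamma/\gamma_0 \le 4$ for every positive integer $\di$, together with the exact value $\gamma/\gamma_0 = 3$ in the case $\di = 1$; here $\gamma = \di+2$ and $\gamma_0 = \sfloor{(\di+1)/2}$ as defined in Section~\ref{lb_math}.

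For $\di = 1$ I would simply substitute the definitions: $\gamma = 3$ and $\gamma_0 = \sfloor{1} = 1$, so $\gamma/\gamma_0 = 3$, and Theorem~\ref{thm:fifo} then yields the claimed $4$-approximation. For general $\di$, the key step is the identity $\gamma_0 = \sfloor{(\di+1)/2} = \sceil{\di/2} \ge \di/2$, from which $\gamma/\gamma_0 \le (\di+2)/(\di/2) = 2 + 4/\di$; this is at most $4$ as soon as $\di \ge 2$, and the only remaining value $\di = 1$ is handled by the computation just above. Hence $1 + \gamma/\gamma_0 \le 5$ for all $\di$, and the $5$-approximation follows. (The companion bound $\gamma/\gamma_0 \ge 2$ stated in the text just before the corollary, though not needed here, drops out the same way from $2\gamma_0 \le \di+1 < \di+2 = \gamma$.)

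I do not expect any genuine obstacle: the statement is purely a consequence of Theorem~\ref{thm:fifo} plus a one-line arithmetic check. The only point requiring a little care is the rounding in the definition of $\gamma_0$ — namely that $\sfloor{(\di+1)/2}$ equals $\sceil{\di/2}$ rather than $\sfloor{\di/2}$ — since using the wrong rounding would degrade the bound by roughly a factor of two for even $\di$.
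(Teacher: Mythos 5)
Your proposal is correct and matches the paper exactly: the corollary is obtained by combining Theorem~\ref{thm:fifo} with the arithmetic bounds $\gamma/\gamma_0 \le 4$ in general and $\gamma/\gamma_0 = 3$ for $\di = 1$, which the paper states as ``straightforward to verify'' and which you verify correctly. Your care about $\sfloor{(\di+1)/2} = \sceil{\di/2}$ is exactly the right point to watch, and your computation $\gamma/\gamma_0 \le 2 + 4/\di$ for $\di \ge 2$ is sound (and tight at $\di=2$).
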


The bound on the approximation ratio of \fifo\ is slightly worse than that
of a \priority\ algorithm based on $R_j$, which is a $\gamma/\gamma_0$-approximation.
In fact, we also have an example on which \fifo\ is strictly worse than a $\gamma/\gamma_0$-approximation (we omit the example here due to space limitations). However, we remark that \fifo\ is both natural and simple; and, perhaps more importantly, Theorem \ref{thm:fifo} will be instrumental
in proving good bounds for the minimization of maximum flow time, where we will use \fifo\ as a subroutine of our algorithm.

\subsection{A resource augmentation bound for F-WGP}
Motivated by the hardness result of Section \ref{sec:hardness}, we study algorithms under resource augmentation. In this context we study $\sigma$-speed algorithms, in which data packets are sent at a speed that is $\sigma$ times faster than the solution we compare to.

\begin{algo}[$\sigma$-\fifo] \ \\
1. Create a new instance $\ins'$ by multiplying release dates: $r'_j := \sigma r_j$; \\
2. Run \fifo\ on $\ins'$; \\
3. Speed up the schedule thus obtained by a factor of $\sigma$.
\end{algo}
The schedule constructed by $\sigma$-\fifo\ is a feasible $\sigma$-speed solution to the original problem because of step 1.
We will show that $\sigma$-\fifo\ is optimal for both \cwgp\ and \fwgp, if $\sigma \ge \gamma/\gamma_0+1$. The following Lemma is crucial.
\begin{lemma}
\label{lem:speed_makespan_to_flow}
If $\sigma$-\fifo\ is a $\sigma$-speed optimal algorithm for \cwgp, then it is also a $\sigma$-speed optimal algorithm for \fwgp.
\end{lemma}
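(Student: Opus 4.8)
The plan is to reduce an instance of \fwgp\ to a sequence of \cwgp\ instances, one for each "deadline guess", and exploit the fact that $\sigma$-\fifo, by construction, operates by multiplying release dates and then running \fifo\ (which is a completion-time algorithm). The key observation is that minimizing maximum flow time is, via the usual flow-time / makespan correspondence, governed by a *local* makespan condition: a schedule has maximum flow time at most $\phi$ if and only if, for every time $t$, all packets released by time $t-\phi$ have completed by time $t$. First I would make this precise and show that $\sigma$-\fifo\ "automatically" satisfies the optimal such $\phi$.

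Concretely, let $\phi^*$ denote the optimal maximum flow time for the \fwgp\ instance $\ins$, achieved by some feasible (speed-$1$) schedule. I would argue that for \emph{any} fixed value $\phi$, one can define an auxiliary \cwgp\ instance $\ins_\phi$ in which each packet $j$ keeps its origin but is given release date $\max\{r_j, D - \phi\}$ for the relevant "block", or — more in the spirit of the paper — partition the time axis and observe that the existence of a speed-$1$ schedule with $\max_j F_j \le \phi^*$ implies that the restricted instances have optimal makespan bounded appropriately. The heart of the matter is the following monotonicity-type claim: since $\sigma$-\fifo\ gives each packet $j$ a priority determined solely by $r_j$ (earlier release $\Rightarrow$ higher priority), the schedule it produces on the full instance, when restricted to the packets released in any prefix $[0,\tau]$ of time, behaves exactly like the \fifo\ schedule on that sub-instance — higher-priority (earlier) packets are never delayed by lower-priority (later) ones. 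Hence if $\sigma$-\fifo\ is $\sigma$-speed optimal for \cwgp, then on every such prefix sub-instance its completion time matches the best speed-$1$ makespan, which in turn is at most the time by which the optimal \fwgp\ schedule necessarily finishes those packets, namely $(\text{last release in the prefix}) + \phi^*$.

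Putting this together: take the packet $j$ realizing $\sigma$-\fifo's maximum flow time, let $\tau = r_j$, and apply the \cwgp\ optimality of $\sigma$-\fifo\ to the sub-instance $S$ of all packets with release date at most $\tau$. This yields $C_j \le C^{\sigma\text{-}\fifo}_{\max}(S) \le C^*_{\max}(S)$ where $C^*_{\max}(S)$ is the speed-$1$ optimal makespan of $S$. Finally, the optimal \fwgp\ schedule for $\ins$, restricted to $S$, is a feasible (speed-$1$) schedule for the \cwgp\ instance $S$ completing every packet by time $\tau + \phi^*$, so $C^*_{\max}(S) \le \tau + \phi^*$, giving $F_j = C_j - r_j \le \phi^*$, i.e.\ $\sigma$-\fifo\ is $\sigma$-speed optimal for \fwgp.

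The main obstacle I anticipate is the "restriction" step — verifying carefully that the \fifo/\priority\ schedule restricted to a release-date prefix genuinely coincides with (or is dominated by) the \fifo\ schedule on the sub-instance, so that the \cwgp\ optimality guarantee can be invoked. This uses the fact that in \priority\ a packet is only ever blocked by strictly higher-priority packets (noted in the paper right after the definition of the blocking relation), which is exactly what makes earlier-released packets "oblivious" to later ones; but one must also confirm that removing the later packets does not create interference that was previously avoided, which follows since removing calls only relaxes the compatibility constraints. A secondary subtlety is that $\sigma$-\fifo\ rescales release dates by $\sigma$ before running \fifo\ and then speeds up by $\sigma$; I would check that this rescaling commutes with the prefix-restriction argument, which it does because the prefix $[0,\sigma\tau]$ of $\ins'$ corresponds exactly to the prefix $[0,\tau]$ of $\ins$.
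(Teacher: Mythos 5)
Your proposal is correct and is essentially the paper's own argument: the paper's inequality $\max_{j:\,r_j=t}C_{j,\sigma}\le\max_{j:\,r_j\le t}C^*_j$ is exactly your prefix claim, established there via a minimal-counterexample contradiction that rests on the same key fact you isolate (a packet's FIFO schedule is fixed at its release date and unaffected by later, lower-priority packets). Your direct phrasing --- apply C-WGP optimality to the sub-instance of packets released by $\tau=r_j$ and bound its optimal makespan by $\tau+\phi^*$ --- is, if anything, a cleaner rendering of the same proof.
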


\begin{proof}
Let $F^*_j$ and $F_{j,\sigma}$ be the flow time of data packet $j$ in an optimal solution and in a $\sigma$-\fifo\ solution, respectively, to \fwgp\ and let $C^*_j$ and $C_{j,\sigma}$ be the completion time of data packet $j$ in the same solutions.
Suppose $\sigma$-\fifo\ is a $\sigma$-speed optimal algorithm for \cwgp,
hence we have $\max_{j \in J} C_{j,\sigma} \le \max_{j \in J} C^*_j$.
We show that this inequality implies, for any time $t$,
\begin{eqnarray}
\label{eq:flow}
\max_{{j \in J},\, {r_j = t}}C_{j,\sigma} \le \max_{{j \in J},\, {r_j \le t}} C^*_j.
\end{eqnarray}
We prove inequality \eqref{eq:flow} by contradiction.
Suppose it is false, then there is an instance $\ins$ of minimum size (number of data packets) for which it is false.
Also, let $t_0$ be the first round in such an instance for which it is false.
By definition, $\sigma$-\fifo\ schedules each data packet $j$ definitively in round $r_j$; no data packet is rescheduled in a later round.
I.e., the algorithm determines the completion time $C_{j,\sigma}$ of data packet $j$ in round $r_j$. If the inequality is false, then we must have
\begin{eqnarray}
\label{eq:flow2}
C_{i,\sigma} > \max_{j \in J,\, r_j \le t_0} C^*_j,
\end{eqnarray}
for some data packet $i$ with $r_i = t_0$, and because $\ins$ is a minimum size instance the instance does not contain any data packets released after round $t_0$. But then \eqref{eq:flow2} contradicts
$\max_{j \in J} C_{j,\sigma} \le \max_{j \in J} C^*_j$ .
Using \eqref{eq:flow} we have
\begin{eqnarray*}
\max_{j \in J} F_{j,\sigma} &=& \max_t \bigg( \max_{j \in J,\, r_j=t}C_{j,\sigma}-t \bigg) \le \max_t \bigg( \max_{j \in J,\, r_j\le t} C^*_j-t \bigg) \\
& \le & \max_t \bigg( \max_{j \in J,\, r_j\le t}F^*_j \bigg) = \max_{j \in J} F^*_j.
\end{eqnarray*}
\end{proof}

\begin{theorem}
For $\sigma \ge \gamma/\gamma_0 +1$, $\sigma$-\fifo\ is a $\sigma$-speed optimal algorithm for both \cwgp\ and \fwgp.
\end{theorem}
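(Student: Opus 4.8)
The plan is to reduce the statement to its completion-time counterpart and then exploit the two bounds of Lemmas~\ref{lemma:ub} and~\ref{lemma:lb}. By Lemma~\ref{lem:speed_makespan_to_flow} it is enough to prove that $\sigma$-\fifo\ is $\sigma$-speed optimal for \cwgp\ whenever $\sigma\ge\gamma/\gamma_0+1$; the \fwgp\ part then follows immediately. So I fix a \wgp\ instance $\ins$, write $C^*:=\max_i C^*_i$ for the optimal $1$-speed completion time on $\ins$, and pick the packet $j$ maximizing the completion time $C_{j,\sigma}$ produced by $\sigma$-\fifo. Since $\sigma$-\fifo\ runs plain \fifo\ on the instance $\ins'$ with release dates $r'_i=\sigma r_i$ and then compresses time by a factor $\sigma$, we have $C_{j,\sigma}=C_j(\ins')/\sigma$, where $C_j(\ins')$ denotes the completion time of $j$ in the \fifo-schedule of $\ins'$. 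The crucial observation is that multiplying release dates changes neither the graph nor the priority order of \fifo, so all the quantities $\delta_i,\pi_i,\gamma,\gamma_0$, the blocking forest, the root $b(j)$ and the path $P(j)$ are exactly as produced by \fifo\ on $\ins'$; only $R'_i=\sigma r_i+\delta_i-\pi_i$ is affected.

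Next I would sandwich $C_j(\ins')$. For the upper bound, let $b:=b(j)$ be the never-blocked root of $j$'s tree and apply Lemma~\ref{lemma:ub} on $\ins'$:
\[
C_j(\ins')\ \le\ R'_b+\frac{\gamma}{\gamma_0}\sum_{i\in P(j)}\pi_i\ =\ \sigma r_b+(\delta_b-\pi_b)+\frac{\gamma}{\gamma_0}\sum_{i\in P(j)}\pi_i .
\]
For the lower bound, apply Lemma~\ref{lemma:lb} to an optimal schedule of $\ins$ with $S=P(j)$: there is some $k\in P(j)$ with $C^*\ge R_k+\sum_{i\in P(j)}\pi_i$. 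Because \fifo\ prioritises by release date, release dates along $P(j)$ are nondecreasing from the root $b$, so $r_b\le r_k$ and hence $R_k=r_k+\delta_k-\pi_k\ge r_k\ge r_b$ (using $\pi_k\le\delta_k$); therefore $\sum_{i\in P(j)}\pi_i\le C^*-r_b$. I would also use the elementary bound $C^*\ge C^*_b\ge r_b+\delta_b$, since packet $b$ cannot reach the sink before it is released and has made its $\delta_b$ hops.

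Plugging the bound on $\sum_{i\in P(j)}\pi_i$ into the upper bound yields $C_j(\ins')\le(\sigma-\gamma/\gamma_0)\,r_b+(\gamma/\gamma_0)\,C^*+\delta_b-\pi_b$. Since $\sigma-\gamma/\gamma_0\ge 1>0$ I may replace $r_b$ by the larger quantity $C^*-\delta_b$, getting
\[
C_j(\ins')\ \le\ \sigma C^*-\Bigl(\sigma-\frac{\gamma}{\gamma_0}-1\Bigr)\delta_b-\pi_b\ \le\ \sigma C^* ,
\]
where the last inequality is exactly where $\sigma\ge\gamma/\gamma_0+1$ is used. Dividing by $\sigma$ gives $C_{j,\sigma}\le C^*$, i.e.\ $\max_i C_{i,\sigma}\le\max_i C^*_i$, so $\sigma$-\fifo\ is $\sigma$-speed optimal for \cwgp, and then for \fwgp\ by Lemma~\ref{lem:speed_makespan_to_flow}.

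The computation is short; the place that needs care --- and the one I would pin down first --- is this last accounting. The proof of Theorem~\ref{thm:fifo} bounds $C_j$ by $\gamma/\gamma_0$ times a lower bound \emph{plus one further copy of the optimum} (the $\delta_{b(j)}$ term), which is what produces the factor $1+\gamma/\gamma_0$; the whole mechanism here is that the surplus speed $\sigma-\gamma/\gamma_0\ge 1$ has to absorb that extra ``$+1$'' precisely. Making this work requires feeding in \emph{two} independent lower bounds on $C^*$ --- the one from Lemma~\ref{lemma:lb} on $P(j)$, controlling $\sum_{i\in P(j)}\pi_i$, and the trivial $C^*\ge r_b+\delta_b$, controlling $r_b$ --- rather than the single application of Lemma~\ref{lemma:lb} used for Theorem~\ref{thm:fifo}; getting these to combine so that the coefficient of $\delta_b$ comes out nonpositive is the only subtle bookkeeping.
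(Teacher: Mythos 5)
Your proof is correct and follows essentially the same route as the paper's: reduce to \cwgp\ via Lemma~\ref{lem:speed_makespan_to_flow}, apply Lemma~\ref{lemma:ub} on the scaled instance $\ins'$, and combine the two lower bounds $C^*\ge r_{b(j)}+\sum_{i\in P(j)}\pi_i$ (from Lemma~\ref{lemma:lb} with $S=P(j)$ and the \fifo\ ordering of release dates) and $C^*\ge r_{b(j)}+\delta_{b(j)}$. The only difference is cosmetic: the paper packages the final step as the convex combination $\frac{1}{\sigma}C^*+\frac{\sigma-1}{\sigma}C^*\ge C_j$, whereas you substitute the lower bounds into the upper bound and check that the coefficient of $\delta_{b(j)}$ is nonpositive --- algebraically the same argument.
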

\begin{proof}
By Lemma \ref{lem:speed_makespan_to_flow}, it suffices to prove that $\sigma$-\fifo\ is $\sigma$-speed optimal for \cwgp.

Let $C_j$ be the completion time of any data packet $j$ in the $\sigma$-\fifo\ solution on instance $\ins$, and let $C'_j$ be the completion time of $j$ in the \fifo\ solution on the instance $\ins'$ (see the algorithm description). By construction $C_j=C'_j/\sigma$. Let $R'_j := \sigma r_j + \delta_j - \pi_j$. 
Then the upper bound of Lemma \ref{lemma:ub} applied to instance $\ins'$ implies $C'_j \leq R'_{b(j)} + (\sigma-1) \sum_{i \in P(j)} \pi_i$.
Hence,
\begin{eqnarray}
\label{eq:cjub}
C_j = C'_j / \sigma \leq \frac{1}{\sigma} R'_{b(j)} + \frac{\sigma - 1}{\sigma} \sum_{i \in P(j)} \pi_i
& \leq & r_{b(j)} + \frac{1}{\sigma} \delta_{b(j)} + \frac{\sigma-1}{\sigma} \sum_{i \in P(j)} \pi_i.
\end{eqnarray}
Since in any solution $b(j)$ has to reach the sink we clearly have
\begin{equation}
\label{eq:simplelb}
\max_{i \in P(j)} C^*_i \geq C^*_{b(j)} \geq r_{b(j)} + \delta_{b(j)}.
\end{equation}
Also, by Lemma \ref{lemma:lb}, for some $k \in P(j)$,
\begin{eqnarray}
\label{eq:complexlb}
\max_{i \in P(j)} C^*_i \geq R_k + \sum_{i \in P(j)} \pi_i
\geq r_k + \sum_{i \in P(j)} \pi_i  \geq r_{b(j)} + \sum_{i \in P(j)} \pi_i,
\end{eqnarray}
where the last inequality follows from $b(j)$ having lowest release time in $P(j)$, by definition of \fifo.
Combining \eqref{eq:cjub}, \eqref{eq:simplelb} and \eqref{eq:complexlb}, we obtain
\begin{eqnarray*}
\max_{i \in P(j)} C^*_i &=& \frac{1}{\sigma} \max_{i \in P(j)} C^*_i + \frac{\sigma-1}{\sigma} \max_{i \in P(j)} C^*_i \\
& \geq & \frac{1}{\sigma} \bigg( r_{b(j)} + \delta_{b(j)} \bigg) + \frac{\sigma-1}{\sigma}  \bigg( r_{b(j)} + \sum_{i \in P(j)} \pi_i \bigg) \\
& = & r_{b(j)} + \frac{1}{\sigma} \delta_{b(j)} + \frac{\sigma-1}{\sigma} \sum_{i \in P(j)} \pi_i \geq C_j. 
\end{eqnarray*}
\end{proof}

\begin{corollary} $5$-\fifo\ is a $5$-speed optimal algorithm for \cwgp\ and \fwgp.
\end{corollary}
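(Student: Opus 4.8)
The plan is to specialize the preceding theorem---the one asserting that $\sigma$-\fifo\ is a $\sigma$-speed optimal algorithm for both \cwgp\ and \fwgp\ whenever $\sigma \ge \gamma/\gamma_0 + 1$---to the single value $\sigma = 5$. Once that theorem is granted, the only thing left to do is to check the purely arithmetic fact that $5 \ge \gamma/\gamma_0 + 1$, i.e.\ $\gamma/\gamma_0 \le 4$, uniformly over all admissible interference radii $\di \ge 1$.

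To verify $\gamma/\gamma_0 \le 4$, I would recall that $\gamma = \di + 2$ and $\gamma_0 = \sfloor{(\di+1)/2}$ and split on the parity of $\di$. If $\di = 2\ell$ with $\ell \ge 1$, then $\gamma_0 = \ell$, so $\gamma/\gamma_0 = (2\ell+2)/\ell = 2 + 2/\ell \le 4$, with equality exactly at $\di = 2$. If $\di = 2\ell+1$ with $\ell \ge 0$, then $\gamma_0 = \ell+1$, so $\gamma/\gamma_0 = (2\ell+3)/(\ell+1) = 2 + 1/(\ell+1) \le 3$; in particular $\gamma/\gamma_0 = 3$ at $\di = 1$. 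Hence $2 \le \gamma/\gamma_0 \le 4$ for every $\di$---this is precisely the elementary bound already noted in the text---and therefore $\gamma/\gamma_0 + 1 \le 5$.

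Consequently $\sigma = 5$ satisfies the hypothesis $\sigma \ge \gamma/\gamma_0 + 1$ of the preceding theorem for every value of $\di$, and applying that theorem gives that $5$-\fifo\ is a $5$-speed optimal algorithm for both \cwgp\ and \fwgp\ (and, using $\gamma/\gamma_0 = 3$, already $4$-\fifo\ is $4$-speed optimal in the special case $\di = 1$). There is no real obstacle here: the corollary is obtained simply by substituting the worst-case value of $\gamma/\gamma_0$ into the resource-augmentation bound, with all the substance carried by Lemma~\ref{lem:speed_makespan_to_flow} (reducing \fwgp\ to \cwgp) and the preceding theorem (the analysis of $\sigma$-\fifo\ for \cwgp\ via Lemmas~\ref{lemma:ub} and~\ref{lemma:lb}).
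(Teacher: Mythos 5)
Your proposal is correct and matches the paper's (implicit) argument exactly: the corollary is just the preceding theorem specialized to $\sigma=5$, justified by the bound $\gamma/\gamma_0 \le 4$ for all $\di$, which your parity case analysis verifies correctly. Nothing further is needed.
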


\subsection{Another upper bound for FIFO}

As we have seen in Section \ref{sec:hardness}, \fwgp\ is extremely hard to approximate without resource augmentation -- no bound better than $\Omega(m^{1/3})$ is possible. Moreover, algorithms that route along shortest paths cannot do better than $\Omega(m)$ (recall Theorem \ref{thm:shortest-paths}). In this section we show that \fifo\ is in fact an $O(m)$-approximation for \fwgp. Thus, apart from constant factors, \fifo\ is best possible among algorithms that use shortest paths.

\begin{theorem}
\label{thm:fifo:lb}
\fifo\ is an $O(m)$-approximation for \fwgp.
\end{theorem}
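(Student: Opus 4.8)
The plan is to combine the completion-time upper bound of Lemma~\ref{lemma:ub} with two elementary lower bounds on the optimal maximum flow time, exploiting the trivial fact that the blocking path $P(j)$ contains at most $m$ packets.

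First I would let $j$ be the packet attaining $\max_j F_j$ in the \fifo\ schedule and apply Lemma~\ref{lemma:ub}. Expanding $R_{b(j)} = r_{b(j)} + \delta_{b(j)} - \pi_{b(j)}$ and dropping the nonnegative term $\pi_{b(j)}$ gives
\[
C_j \;\le\; r_{b(j)} + \delta_{b(j)} + \frac{\gamma}{\gamma_0}\sum_{i \in P(j)} \pi_i .
\]
Since \fifo\ orders packets by release date and $b(j)$ is an ancestor of $j$ in the blocking forest, release dates are nondecreasing along the path $P(j)$; in particular $r_{b(j)} \le r_j$, so
\[
F_j \;=\; C_j - r_j \;\le\; \delta_{b(j)} + \frac{\gamma}{\gamma_0}\sum_{i \in P(j)} \pi_i .
\]

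Next I would bound $\opt := \max_j F^*_j$ from below. In any feasible schedule, packet $i$ stays at $o_i$ until round $r_i$ and then needs at least $\delta_i$ rounds to reach $s$, since a packet advances by at most one hop per round; hence $C^*_i \ge r_i + \delta_i$, i.e.\ $F^*_i \ge \delta_i$, and therefore $\opt \ge \delta_i \ge \pi_i$ for every packet $i$, and in particular $\opt \ge \delta_{b(j)}$. Substituting these bounds and using $|P(j)| \le m$ together with $\gamma/\gamma_0 \le 4$ yields
\[
F_j \;\le\; \opt + \frac{\gamma}{\gamma_0}\, m\,\opt \;\le\; (4m+1)\,\opt ,
\]
so $\max_j F_j = O(m)\cdot\opt$, which is the claim.

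There is essentially no technical obstacle once Lemma~\ref{lemma:ub} is available; the two points that need care are (i) the monotonicity of release dates along the blocking path, which is precisely where the \fifo\ priority rule enters (and which already appeared in the proof of Theorem~\ref{thm:fifo}), and (ii) realizing that for this weaker $O(m)$ bound the crude estimate $\sum_{i \in P(j)}\pi_i \le |P(j)|\cdot\max_i \delta_i \le m\,\opt$ already suffices, so that the sharper Lemma~\ref{lemma:lb} is not even needed here.
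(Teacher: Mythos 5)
Your proposal is correct and follows essentially the same route as the paper's own proof: apply Lemma~\ref{lemma:ub}, use $r_{b(j)} \le r_j$ (the \fifo\ priority rule) to bound $R_{b(j)} - r_j \le \delta_{b(j)}$, and then bound both $\delta_{b(j)}$ and each $\pi_i$ by $\max_i F^*_i$ together with $|P(j)| \le m$. The only cosmetic difference is that you substitute $\gamma/\gamma_0 \le 4$ to get the explicit constant $(4m+1)$, whereas the paper leaves the bound as $\bigl(1 + \tfrac{\gamma}{\gamma_0} m\bigr)\max_i F^*_i$.
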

\begin{proof}
Since every packet must be gathered at the sink, clearly $\max_j F^*_j \geq \max_j \delta_j \geq \max_j \pi_j$.
Now let $j$ be the packet incurring the maximum flow time in the schedule obtained by \fifo.
Since $r_j \geq r_{b(j)}$ (by definition of \fifo), we have
\begin{equation}
\label{eq:rj}
R_{b(j)} - r_j = r_{b(j)} + \delta_{b(j)} - \pi_{b(j)} - r_j \leq \delta_{b(j)}
\end{equation}

Using Lemma \ref{lemma:ub} and \eqref{eq:rj}, we get
\begin{eqnarray*}
F_j = C_j - r_j & \leq & R_{b(j)} - r_j + \frac{\gamma}{\gamma_0} \sum_{i \in P(j)} \pi_i \\
& \leq & \delta_{b(j)} + \frac{\gamma}{\gamma_0} \sum_{i \in P(j)} \pi_i \\
& \leq & \max_{i} F^*_i + \frac{\gamma}{\gamma_0} \cdot |P(j)| \cdot \max_i F^*_i \\
& \leq & \left( 1 + \frac{\gamma}{\gamma_0} m \right) \max_i F^*_i.
\end{eqnarray*}
\end{proof}

\section{Conclusion}
\label{lb_conclusion}
We considered the wireless gathering problem with the objective of minimizing the maximum flow time of data packets (\fwgp). We showed that the simple on-line algorithm \fifo\ has favorable behavior: although the problem is extremely hard to approximate in general, augmenting the transmission rate by a factor of $5$ allows \fifo\ to remain within the cost of an optimal solution for the problem without augmentation.

It is an open question whether optimality can be achieved by augmenting the transmission rate by a factor smaller than $5$, and whether an efficient algorithm exists that matches the $\Omega(m^{1/3})$ lower bound on the approximability of \fwgp.

Another interesting set of questions concerns resource augmentation by allowing the algorithms to use extra frequencies, meaning that more than one data packet can be sent simultaneously over the same channel. For instance, does there exist a 5-frequency optimal FIFO-type algorithm?

For the minimization of the completion time (\cwgp), the existence of a polynomial time approximation scheme is still open.
It is known that no algorithm that uses shortest paths to route the data packets to the sink can give an improvement over the currently best approximation ratio \cite{Bonifaci:2006}.
It is a challenge to design and analyze congestion avoiding algorithms with better ratios.

\section*{Acknowledgments}
Research supported by EU FET Integrated Project AEOLUS IST-15964, by FET EC 6th FP Research Project ARRIVAL FP6-021235-2, by the Dutch BSIK-BRICKS project, and by MIUR-FIRB Italy-Israel project RBIN047MH9.

{
\small
\bibliographystyle{abbrvnat}

\begin{thebibliography}{18}
\providecommand{\natexlab}[1]{#1}
\providecommand{\url}[1]{\texttt{#1}}
\expandafter\ifx\csname urlstyle\endcsname\relax
  \providecommand{\doi}[1]{doi: #1}\else
  \providecommand{\doi}{doi: \begingroup \urlstyle{rm}\Url}\fi

\bibitem[Akyildiz et~al.(2002)Akyildiz, Su, Sankarasubramaniam, and
  Cayirci]{Akyldiz:2002}
I.~F. Akyildiz, W.~Su, Y.~Sankarasubramaniam, and E.~Cayirci.
\newblock Wireless sensor networks: a survey.
\newblock \emph{Computer Networks}, 38\penalty0 (4):\penalty0 393--422, 2002.

\bibitem[Ausiello et~al.(1999)Ausiello, Protasi, Marchetti-Spaccamela, Gambosi,
  Crescenzi, and Kann]{Ausiello:1999}
G.~Ausiello, M.~Protasi, A.~Marchetti-Spaccamela, G.~Gambosi, P.~Crescenzi, and
  V.~Kann.
\newblock \emph{Complexity and Approximation: Combinatorial Optimization
  Problems and Their Approximability Properties}.
\newblock Springer-Verlag, 1999.

\bibitem[Bar-Yehuda et~al.(1992)Bar-Yehuda, Goldreich, and
  Itai]{Bar-Yehuda:1992}
R.~Bar-Yehuda, O.~Goldreich, and A.~Itai.
\newblock On the time-complexity of broadcast in multi-hop radio networks: an
  exponential gap between determinism and randomization.
\newblock \emph{Journal of Computer and Systems Sciences}, 45\penalty0
  (1):\penalty0 104--126, 1992.

\bibitem[Bar-Yehuda et~al.(1993)Bar-Yehuda, Israeli, and Itai]{Bar-Yehuda:1993}
R.~Bar-Yehuda, A.~Israeli, and A.~Itai.
\newblock Multiple communication in multihop radio networks.
\newblock \emph{SIAM Journal on Computing}, 22\penalty0 (4):\penalty0 875--887,
  1993.

\bibitem[Bermond et~al.(2006)Bermond, Galtier, Klasing, Morales, and
  P{\'e}rennes]{Bermond:2006}
J.~Bermond, J.~Galtier, R.~Klasing, N.~Morales, and S.~P{\'e}rennes.
\newblock Hardness and approximation of gathering in static radio networks.
\newblock \emph{Parallel Processing Letters}, 16\penalty0 (2):\penalty0
  165--183, 2006.

\bibitem[Bonifaci et~al.(2006)Bonifaci, Korteweg, Marchetti-Spaccamela, and
  Stougie]{Bonifaci:2006}
V.~Bonifaci, P.~Korteweg, A.~Marchetti-Spaccamela, and L.~Stougie.
\newblock An approximation algorithm for the wireless gathering problem.
\newblock In \emph{Proc. 10th Scandinavian Workshop on Algorithm Theory}, pages
  328--338, 2006.

\bibitem[Chan et~al.(2006)Chan, Lam, and Liu]{Chan:2006}
H.-L. Chan, T.~W. Lam, and K.-S. Liu.
\newblock Extra unit-speed machines are almost as powerful as speedy machines
  for competitive flow time scheduling.
\newblock In \emph{Proc. 17th Symp. on Discrete Algorithms}, pages 334--343,
  2006.

\bibitem[Chekuri et~al.(2004)Chekuri, Goel, Khanna, and Kumar]{Chekuri:2004}
C.~Chekuri, A.~Goel, S.~Khanna, and A.~Kumar.
\newblock Multi-processor scheduling to minimize flow time with epsilon
  resource augmentation.
\newblock In \emph{Proc. 36th Symp. on Theory of Computing}, pages 363--372,
  2004.

\bibitem[Duckworth et~al.(2005)Duckworth, Manlove, and Zito]{Duckworth:2005}
W.~Duckworth, D.~Manlove, and M.~Zito.
\newblock On the approximability of the maximum induced matching problem.
\newblock \emph{Journal of Discrete Algorithms}, 3\penalty0 (1):\penalty0
  79--91, 2005.

\bibitem[Florens et~al.(2004)Florens, Franceschetti, and
  McEliece]{Florens:2004}
C.~Florens, M.~Franceschetti, and R.~J. McEliece.
\newblock Lower bounds on data collection time in sensory networks.
\newblock \emph{IEEE Journal on Selected Areas in Communications}, 22:\penalty0
  1110-- 1120, 2004.

\bibitem[Gargano and Rescigno(2006)]{Gargano:2006}
L.~Gargano and A.~A. Rescigno.
\newblock Optimally fast data gathering in sensor networks.
\newblock In \emph{Proc. 31st Symp. on Mathematical Foundations of Computer
  Science}, pages 399--411, 2006.

\bibitem[Kalyanasundaram and Pruhs(2000)]{Kalya:2000}
B.~Kalyanasundaram and K.~Pruhs.
\newblock Speed is as powerful as clairvoyance.
\newblock \emph{Journal of the ACM}, 47\penalty0 (4):\penalty0 617--643, 2000.

\bibitem[{\SortNoop{Kumar }Anil Kumar} et~al.(2004){\SortNoop{Kumar }Anil
  Kumar}, Marathe, Parthasarathy, and Srinivasan]{Kumar:2004}
V.~S. {\SortNoop{Kumar }Anil Kumar}, M.~V. Marathe, S.~Parthasarathy, and
  A.~Srinivasan.
\newblock End-to-end packet-scheduling in wireless ad-hoc networks.
\newblock In J.~I. Munro, editor, \emph{Proc. 15th Symp. on Discrete
  Algorithms}, pages 1021--1030, 2004.

\bibitem[{\SortNoop{Kumar }Anil Kumar} et~al.(2005){\SortNoop{Kumar }Anil
  Kumar}, Marathe, Parthasarathy, and Srinivasan]{Kumar:2005}
V.~S. {\SortNoop{Kumar }Anil Kumar}, M.~V. Marathe, S.~Parthasarathy, and
  A.~Srinivasan.
\newblock Algorithmic aspects of capacity in wireless networks.
\newblock In \emph{Measurement and Modeling of Computer Systems}, pages
  133--144, 2005.

\bibitem[McCullough and Torng(2004)]{McCullough:2004}
J.~McCullough and E.~Torng.
\newblock {SRPT} optimally utilizes faster machines to minimize flow time.
\newblock In J.~I. Munro, editor, \emph{Proc. 15th Symp. on Discrete
  Algorithms}, pages 350--358, 2004.

\bibitem[Pahlavan and Levesque(1995)]{Pahlavan:1995}
K.~Pahlavan and A.~H. Levesque.
\newblock \emph{Wireless information networks}.
\newblock Wiley, New York, 1995.

\bibitem[Pelc(2002)]{Pelc:2002}
A.~Pelc.
\newblock Broadcasting in radio networks.
\newblock In \emph{Handbook of Wireless Networks and Mobile Computing}, pages
  509--528. Wiley and Sons, 2002.

\bibitem[Schmid and Wattenhofer(2006)]{Schmid:2006}
S.~Schmid and R.~Wattenhofer.
\newblock Algorithmic models for sensor networks.
\newblock In \emph{Proc. 20th Int.\ Parallel and Distributed Processing
  Symposium}, 2006.

\end{thebibliography}
\providecommand\SortNoop[1]{}

}

\end{document}